\documentclass[a4paper,10pt]{article}
\usepackage{amsfonts, amsmath, amssymb, amsthm}
\usepackage[T1]{fontenc} %
\usepackage[latin1]{inputenc} %
\usepackage{xspace}

\usepackage{vmargin}
\setmarginsrb{1.4in}{1.1in}{1.4in}{1in}{0cm}{0mm}{0cm}{8mm}

\theoremstyle{plain}
  \newtheorem{theorem}{Theorem}
  \newtheorem{lemma}[theorem]{Lemma}
  \newtheorem{corollary}[theorem]{Corollary}

\theoremstyle{definition}

\theoremstyle{remark}

\newcommand{\MIST}{\textsc{MIST}\xspace}
\newcommand{\pIST}{\textsc{$p$-IST}\xspace} 
\newcommand{\MISTlong}{\textsc{Maximum Internal Spanning Tree}\xspace}
\newcommand{\internal}[2]{\mathsf{i}_{#1}(#2)}

\title{A Linear Vertex Kernel for\\ Maximum Internal Spanning Tree \thanks{%
A preliminary version of this paper appeared in the proceedings of ISAAC 2009 \cite{FominGST09}.}}

\author{Fedor V. Fomin\thanks{Department of Informatics, University of Bergen, Bergen, Norway.
Email: \texttt{fomin@ii.uib.no}}
\and Serge Gaspers\thanks{Institute of Information Systems (184/3), Vienna University of Technology, Vienna, Austria.
Email: \texttt{gaspers@kr.tuwien.ac.at}}
\and Saket Saurabh\thanks{Institute of Mathematical Sciences, Chennai, India.
Email: \texttt{saket@imsc.res.in}}
\and St\'ephan Thomass\'e\thanks{LIP, ENS Lyon, Lyon, France.
Email: \texttt{stephan.thomasse@ens-lyon.fr}}}

\date{}

\begin{document}

%
%
%
%

\maketitle

\begin{abstract} 
We present a polynomial time algorithm that for any graph $G$ and integer $k\geq 0$, either finds a spanning tree with at least $k$ internal vertices, or
outputs a new graph $G_R$ on at most $3k$ vertices and an integer $k'$ such that $G$ has a spanning tree with at least $k$ internal vertices if and only if $G_R$ has a spanning tree with at
least $k'$ internal vertices. In other words, we show that the \MISTlong problem parameterized by the number of internal vertices $k$ has a 
$3k$-vertex kernel. Our result is based on an innovative application of a classical min-max result 
about hypertrees in hypergraphs which states that ``a hypergraph $H$ contains a 
hypertree  if and only if $H$ is partition connected.''
 
 \medskip
 \noindent
 \textbf{Keywords}: algorithm, crown decomposition, kernelization, parameterized complexity, preprocessing
\end{abstract}
%


\section{Introduction}

In the \MISTlong problem  (\MIST), we are given a graph $G $   and the task is to find a spanning tree of $G$ with a maximum number of internal vertices.  \MIST is a natural generalization of 
the {\sc Hamiltonian Path} problem because an $n$-vertex graph has a Hamiltonian path if and only if it has a spanning tree with $n-2$ internal vertices.
%

In this paper we study a parameterized version of \MIST. 
Parameterized decision problems are defined by specifying the input ($I$), the parameter ($k$), and
the question to be answered.  A parameterized problem that can be solved in time $f(k)|I|^{O(1)}$, where $f$ is a
function of $k$ alone is said to be \emph{fixed parameter tractable} (FPT).  The natural parameter $k$ for \MIST is the number of internal vertices in the spanning tree and 
 the parameterized version of \MIST, $p$-\textsc{Internal Spanning Tree}  or \pIST for short, is for a given graph $G$ and integer $k$, decide if $G$ contains a spanning tree with at least $k$ internal vertices. 
 It follows from Robertson and Seymour's Graph Minors theory that \pIST is  FPT \cite{RobertsonS85a}.  Indeed, the property of not having a spanning tree with at least $k$ internal vertices is closed under taking minors, and thus such graphs  can be characterized by a finite set of forbidden minors.  
 One of the consequences of the Graph Minors theory is that every graph property characterized by a finite set of forbidden minors is FPT, and thus  \pIST is FPT. 
 These arguments are however not constructive.  The first constructive algorithm for \pIST is due to Prieto and Sloper  \cite{PrietoS05} and has running time
$2^{4k\log{k}} \cdot n^{O(1)}$. Recently this result was improved by 
Cohen et al.~\cite{CohenF09} who solved a more general directed version of the problem in time $49.4^k \cdot n^{O(1)}$.
When the input graph is cubic, \pIST can be solved in time $2.1364^kn^{O(1)}$ \cite{FernauGR09}.
In this paper we study \pIST from the kernelization viewpoint. 

A parameterized problem is said to admit a {\it polynomial kernel} if there is a polynomial time algorithm (where the degree of the polynomial is independent of $k$), called a {\em kernelization} algorithm, that reduces the input instance to an instance whose size is bounded by a polynomial $p(k)$ in $k$, while preserving the answer. This reduced instance is called a {\em $p(k)$ kernel} for the problem.
Let us remark that the instance size and the number of vertices in the instance may be different,  
and thus for bounding the number of vertices in the reduced graph, the term {\em $p(k)$-vertex kernel} is often used. 
 While many problems on graphs are known to have polynomial kernels  (parameterized by the solution size), there are not so many $O(k)$, or linear-vertex kernels
known in the literature.  Notable examples include a $2k$-vertex kernel for {\sc Vertex Cover}~\cite{CKJ01},
a $k$-vertex kernel for {\sc Set Splitting}~\cite{LokshtanovS09}, and a $6k$-vertex kernel for 
{\sc Cluster Editing}~\cite{Guo09}.

No linear-vertex kernel for \pIST was known prior to our work.  
Prieto and Sloper~\cite{PriSlo2003} provided an $O(k^3)$-kernel for the problem and then improved 
it to $O(k^2)$ in~\cite{PrietoS05}. 
The main result of this paper is that  \pIST has a $3k$-vertex kernel.
The kernelization of Prieto and Sloper is based on the so-called  ``Crown Decomposition Method''~\cite{Abu-KhzamFLS07}.  Here, we use a different method, based on a 
min-max characterization of hypergraphs containing hypertrees by Frank et al.~\cite{FrankKK03}. 
 As a corollary of the new kernelization, we obtain an algorithm for solving \pIST running in time 
 $8^k \cdot n^{O(1)}$.

The paper is organized as follows. In Section~\ref{sec:prelim}, we provide necessary definitions and facts about graphs and  hypergraphs. In Section~\ref{sec:kern_algorithm}, we give the kernelization algorithm. 
Section~\ref{sec:proofoflemma} is devoted to the proof of the main combinatorial lemma, which is central to the 
correctness of the kernelization algorithm. 

\section{Preliminaries}\label{sec:prelim}

\subsection{Graphs}
Let $G=(V,E)$ be an undirected simple graph with vertex set $V$ and edge set $E$. 
For any nonempty subset
$W\subseteq V$, the subgraph of $G$ induced by $W$ is denoted by
$G[W]$.  The \emph{neighborhood} of a vertex $v$ in $G$ is $N_G(v)=\{u\in
V:~\{u,v\}\in E\}$,  and for a vertex set
$S \subseteq V$ we set $N_G(S) = \bigcup_{v \in S} N_G(v)\setminus  S$.
The degree of vertex $v$ in $G$ is $d_G(v)=|N_G(v)|$.
Sometimes, when the graph is clear from the context, we omit the subscripts. 

\subsection{The Hypergraphic Matroid}
\label{sec:hyper} 
Let $H=(V,E)$ be a hypergraph. A hyperedge $e\in E$ is a subset of $V$.  A subset $F$ of hyperedges is a \emph{hyperforest} if $|\cup F'|\geq |F'|+1$ for every subset $F'$ of $F$, where $\cup F'$ denotes 
the union of vertices contained in the hyperedges of $F'$. This condition 
is also called the \emph{strong Hall condition}, where \emph{strong} stands for the extra plus one added
to the usual Hall condition. A hyperforest with $|V|-1$ edges
is called a \emph{hypertree}. Lorea proved (see~\cite{FrankKK03} or~\cite{Lorea75}) that ${\cal M}_H =(E, {\cal F})$,
where $\cal F$ consists of the hyperforests of $H$, is a matroid, called the 
{\it hypergraphic matroid}. Observe that these definitions are well-known when restricted  
to graphs. 

\emph{Shrinking} a hyperedge $e$ means replacing it by another hyperedge $e'\subseteq e$. This operation may create multi-hyperedges.
Lov\'asz~\cite{Lovasz70} proved that $F$ is a hyperforest if and only if every hyperedge $e$ of
$F$ can be shrunk to an edge $e'$ (that is, $e' \subseteq e$ contains two vertices of $e$) in such a way 
that the set $F'$ consisting of these shrunken edges forms a forest in the usual sense, i.e., a simple acyclic graph. 
Observe that if $F$ is a hypertree then its set of shrunken edges $F'$ forms a spanning tree on $V$. 

The {\it border} of a partition ${\cal P}=\{V_1,\dots ,V_p\}$ of $V$   
is the set $\delta ({\cal P})$ of hyperedges of $H$ which intersect at least 
two parts of $\cal P$. A hypergraph is \emph{partition-connected} when 
$|\delta ({\cal P})| \geq |{\cal P}|-1$ for every partition $\cal P$ of $V$.  
The following theorem can be found in~\cite[Corollary 2.6]{FrankKK03}.  

\begin{theorem}\label{thm:hypertree}
$H$ contains a hypertree if and only if $H$ is partition-connected.
\end{theorem}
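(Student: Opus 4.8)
\textbf{Proof proposal for Theorem~\ref{thm:hypertree}.}

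The plan is to prove both directions, with the forward direction (hypertree $\Rightarrow$ partition-connected) being the easy one and the converse being the substantial part. For the forward direction, suppose $H$ contains a hypertree $T$ with $|V|-1$ edges. Fix any partition $\mathcal{P}=\{V_1,\dots,V_p\}$ of $V$ and contract each $V_i$ to a single point; the edges of $T$ that lie entirely inside some part get absorbed, while the edges in $\delta(\mathcal{P}) \cap T$ survive as (hyper)edges of the contracted hypergraph on $p$ vertices. Using the Lov\'asz shrinking characterization, shrink each edge of $T$ to an ordinary edge so that $T$ becomes a spanning tree $T'$ on $V$; contracting the parts of $\mathcal{P}$ then turns $T'$ into a connected graph on $p$ vertices (a spanning tree is connected and remains connected under vertex identifications), which therefore has at least $p-1$ edges. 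Each such edge comes from a distinct edge of $T$ that must cross between two parts, hence lies in $\delta(\mathcal{P})$. Therefore $|\delta(\mathcal{P})| \geq p-1 = |\mathcal{P}|-1$.

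For the converse, the natural route is via the hypergraphic matroid $\mathcal{M}_H = (E,\mathcal{F})$ from Section~\ref{sec:hyper}. A hypertree is exactly a spanning set of $\mathcal{M}_H$ of size $|V|-1$, so $H$ contains a hypertree if and only if the rank $r(E)$ of $\mathcal{M}_H$ equals $|V|-1$. (Note $r(E)\le |V|-1$ always, by the strong Hall condition applied to $F'=F$.) Thus it suffices to show: if $H$ is partition-connected then $r(E) \geq |V|-1$. The idea is to compute, or lower-bound, $r(E)$ by the rank formula for the hypergraphic matroid. The rank of $\mathcal{M}_H$ can be expressed as $r(E) = \min_{\mathcal{P}} \big( |V| - |\mathcal{P}| + |\delta(\mathcal{P})| \big)$, where the minimum is over all partitions $\mathcal{P}$ of $V$ — intuitively, a maximum hyperforest decomposes according to an optimal partition, contributing $|V_i|-1$ edges inside each part $V_i$ plus a spanning-forest-worth of the border edges, and the submodular/matroid-union machinery (Nash-Williams/Edmonds type) makes this precise. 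Granting this formula, partition-connectedness says $|\delta(\mathcal{P})| \geq |\mathcal{P}|-1$ for every $\mathcal{P}$, so $|V| - |\mathcal{P}| + |\delta(\mathcal{P})| \geq |V|-1$ for every $\mathcal{P}$, and hence $r(E) \geq |V|-1$, giving $r(E) = |V|-1$ and a hypertree.

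The main obstacle is establishing the rank formula $r(E) = \min_{\mathcal{P}}(|V| - |\mathcal{P}| + |\delta(\mathcal{P})|)$ for the hypergraphic matroid. The clean way is to realize $\mathcal{M}_H$ as a matroid union / transversal-type construction: the bipartite "incidence" structure where each hyperedge must choose two of its vertices to form an ordinary edge (Lov\'asz's characterization) expresses $\mathcal{M}_H$ as the restriction of a graphic matroid to the edges obtainable by such choices, and the rank of a union or of such a composed structure is given by a minimization over partitions via the matroid union theorem together with the known rank function of the graphic (cycle) matroid. I would invoke these standard matroid facts rather than reprove them, citing~\cite{FrankKK03,Lorea75}; the only genuinely new content is the two-line translation of the min-formula into the partition-connectivity inequality, which is immediate. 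If one prefers to avoid the matroid rank formula, an alternative is a direct inductive argument: pick an edge $e$, either contract it or delete it, and check that partition-connectivity is preserved in one of the two reduced hypergraphs while the target bound $|V|-1$ decreases appropriately — but managing the bookkeeping of which operation to apply is exactly where the partition hypothesis must be used carefully, so the matroid approach is cleaner.
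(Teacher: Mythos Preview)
The paper does not prove Theorem~\ref{thm:hypertree} at all: it simply quotes it as \cite[Corollary 2.6]{FrankKK03} and uses it as a black box, so there is no ``paper's own proof'' to compare your attempt to.

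That said, your sketch is a reasonable outline of how the result is actually obtained. The forward direction is correct and essentially complete: shrinking a hypertree to an ordinary spanning tree via Lov\'asz and then contracting the parts of $\mathcal{P}$ does give $\ge |\mathcal{P}|-1$ crossing edges, each coming from a distinct hyperedge in $\delta(\mathcal{P})$. For the converse you correctly isolate the real content as the rank formula $r(E)=\min_{\mathcal{P}}\bigl(|V|-|\mathcal{P}|+|\delta(\mathcal{P})|\bigr)$ for the hypergraphic matroid; note that the inequality you actually need is the hard one, $r(E)\ge \min_{\mathcal{P}}(\cdots)$, since the bound $r(E)\le \min_{\mathcal{P}}(\cdots)$ is the easy counting direction and does not by itself yield $r(E)\ge |V|-1$. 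This hard inequality is precisely the depth of the theorem, and the Frank--Kir\'aly--Kriesell paper derives Corollary~2.6 from exactly this matroid-union/rank analysis. So when you say you would cite~\cite{FrankKK03,Lorea75} for the rank formula, you are effectively doing what the present paper does: deferring the substance to~\cite{FrankKK03}. That is fine for the purposes here, but be aware it is not an independent proof.
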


The proof of Theorem~\ref{thm:hypertree} 
can be turned into a polynomial time algorithm, that is, given a hypergraph $H=(V,E)$ we  
can either find a hypertree or find a partition  $\cal P$ of $V$ such that $|\delta ({\cal P})| < |{\cal P}|-1$ in polynomial time. 
For the sake of completeness, we briefly describe a polynomial time algorithm to do this. We do not attempt to optimize the running time of this algorithm.
 Recall that ${\cal M}_H =(E, {\cal F})$,  
where $\cal F$ consists of the hyperforests of $H$, is a matroid 
 and hence we can construct a hypertree, if one exists, greedily. We start with an empty forest and iteratively 
 try to grow our current hyperforest by adding new edges. When inspecting a new edge we either reject or 
 accept it in our current hyperforest depending on whether by adding it we still have a hyperforest. The only question
is to be able to test efficiently if a given collection of edges forms a hyperforest. 
In other words, we have to check whether the strong Hall condition holds. This can be done 
in polynomial time 
by simply running the well-known polynomial time 
algorithm for testing the usual Hall condition for every subhypergraph $H\setminus v$, where $v$ is a vertex 
and $H\setminus v$ is the hypergraph containing all hyperedges $e\setminus v$ for $e\in E$.  

We can also shrink the edges of a hypertree to 
a spanning tree in polynomial time. For this, consider any edge $e$ of the hypertree with more than 
two vertices  (if none exist, we already have our tree). By the result of Lov\'asz~\cite{Lovasz70} mentioned 
above, one of the vertices $v\in e$ can be deleted from $e$ in such a way that we still have a hypertree.  
Hence we just find this 
vertex by checking the strong Hall condition for every choice of $e\setminus v$ where 
$v\in e$. This implies that we need to apply the algorithm to test the strong Hall condition at most $|V|$ times 
to obtain the desired spanning tree. Consequently, there exists a polynomial time algorithm which can 
find a spanning tree whose edges are obtained by shrinking hyperedges of a partition-connected hypergraph. 

We now turn to the co-NP certificate, that is, we want to exhibit a partition $\cal P$ of $V$ such that 
$|\delta ({\cal P})| < |{\cal P}|-1$ when $H$ is not partition-connected. 
The algorithm simply tries to contract every pair of vertices in $H=(V,E)$ 
and checks if the resulting hypergraph is partition-connected. 
When it is not,
we contract the two vertices, and recurse. We stop when the resulting hypergraph 
$H'$ is not partition-connected, and every contraction results in a partition-connected 
hypergraph. Observe then that if a partition $\cal P$ of $H'$ is such that $|\delta ({\cal P})| < |{\cal P}|-1$
and $\cal P$ has a part which is not a singleton, then contracting two vertices of this part
results in a non partition-connected hypergraph. Hence, the singleton partition is the unique 
partition $\cal P$ of $H'$ such that $|\delta ({\cal P})| < |{\cal P}|-1$. This singleton 
partition corresponds to the partition of $H$ which gives our co-NP certificate.

\section{Kernelization Algorithm}\label{sec:kern_algorithm}
Let $G=(V,E)$ be a connected graph on $n$ vertices and $k \in \mathbb{N}$ be a parameter. In this section we describe an algorithm that 
  takes $G$ and $k$ as an input, and in time polynomial in the size of $G$ either solves \pIST, or produces a reduced  graph $G_R$ on at most $3k$ vertices 
and an integer $k' \leq k$, such that $G$ has a spanning tree with at least $k$ internal vertices if and only if $G_R$ has a spanning tree with at least $k'$ internal vertices. 
In other words, we show that \pIST has a $3k$-vertex kernel.
%



The algorithm is based on the following combinatorial lemma, which is interesting on its own. For two disjoint sets $X,Y \subseteq V$, we denote by $B(X,Y)$ the bipartite graph obtained from $G[X\cup Y]$ by removing all edges with both endpoints in $X$ and all edges with both endpoints in $Y$.

\begin{lemma}\label{lemma:combinatorial}
If  $n\ge 3$, and $I$ is an independent set of $G$ of cardinality at least $2n/3$, then  there are nonempty subsets $S\subseteq V\setminus I$ and $L \subseteq I$ such that
\begin{itemize}
\setlength{\itemsep}{-1pt}
\item[(i)] $N(L)=S$, and
\item[(ii)] $B(S,L)$ has a spanning tree such that all vertices of $S$ and $|S|-1$ vertices of $L$ are internal.
\end{itemize}
 Moreover, given a graph on at least $3$ vertices and an independent set of cardinality at least $2n/3$, such subsets can be found in time polynomial in the size of $G$.
\end{lemma}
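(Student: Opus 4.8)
The plan is to use Theorem~\ref{thm:hypertree} on a suitably defined hypergraph built from the independent set $I$ and its complement. Set $C = V \setminus I$, so $|C| \le n/3$ and $|I| \ge 2n/3$, hence $|I| \ge 2|C|$. The natural hypergraph to consider is $H = (C, E_H)$ whose vertex set is $C$ and which has one hyperedge $e_v = N(v) \subseteq C$ for each $v \in I$ (ignoring, or handling separately, vertices of $I$ with $\le 1$ neighbor; note that since $G$ is connected and $n \ge 3$, isolated vertices do not occur, and degree-one vertices of $I$ correspond to singleton hyperedges which are useless for a hypertree but can be pruned). A hypertree in $H$ is a set of $|C|-1$ hyperedges satisfying the strong Hall condition; by Lov\'asz's theorem cited in the preliminaries, each such hyperedge $e_v$ can be contracted to an actual edge $\{v', v''\} \subseteq N(v)$ so that the contracted edges form a spanning tree $T_C$ of $C$. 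The key point is that $H$ being partition-connected is a condition only about $C$, and I expect it to hold (or to be reparable by contracting pairs of $C$) precisely because $I$ is large; if $H$ is partition-connected, Theorem~\ref{thm:hypertree} gives the hypertree and the algorithmic remarks following it give the contraction in polynomial time.

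From a hypertree on $C$ with contracted spanning tree $T_C$, I would build the bipartite tree as follows. Let $L \subseteq I$ be the set of $|C|-1$ vertices whose hyperedges were chosen, and for each $v \in L$ let $\{a_v, b_v\}$ be its contracted edge. Now build a tree on $S \cup L$ where $S = \bigcup_{v\in L}\{a_v,b_v\}$: take the vertices $a_v, b_v$ and, for each $v \in L$, insert $v$ as a subdivision vertex on the tree-edge $\{a_v, b_v\}$ of $T_C$ — i.e. replace the edge $\{a_v,b_v\}$ of $T_C$ by the path $a_v - v - b_v$. Subdividing every edge of a tree on $|S|$ vertices yields a tree on $S \cup L$ in which all $|S|$ original vertices have the same degree as before (so the non-leaves of $T_C$ stay internal) — but more is true: in a tree every vertex of degree $\ge 2$ is internal, and after subdivision each $v \in L$ has degree exactly $2$, hence is internal, while a vertex $a \in S$ is internal iff it had degree $\ge 2$ in $T_C$, which can fail for leaves of $T_C$. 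This does not immediately give "all of $S$ internal", so the construction needs a twist: I would instead not take all of $C$ but only the subtree actually needed, or re-root and attach leaves of $T_C$ through a pendant vertex of $I$. Concretely, for each leaf $a$ of $T_C$ pick some $w \in I$ with $a \in N(w)$ (exists since $d(a)\ge 1$ and, after a little care, $a$ has a neighbor in $I$), add $w$ to $L$ and the edge $\{a,w\}$; this makes $a$ internal at the cost of keeping $w$ a leaf, and one checks $|L| = |S| - 1$ still holds by the standard tree edge/vertex count. Finally set $S = N(L)$ by enlarging $L$ within $I$ as needed so that equality $N(L) = S$ in (i) holds — since the extra vertices of $I$ we add are leaves, this is harmless. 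The bipartiteness of $B(S,L)$ is automatic since $L \subseteq I$ is independent and we only use edges between $L$ and $S$.

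The main obstacle is establishing that the hypergraph $H$ is partition-connected (or can be made so while controlling the parameter drop $k \to k'$). A partition $\mathcal P = \{V_1,\dots,V_p\}$ of $C$ with $|\delta(\mathcal P)| < p - 1$ means fewer than $p-1$ vertices of $I$ have neighbors in two different parts; equivalently, at least $|I| - (p-2)$ vertices of $I$ have all their neighbors inside a single part. A counting argument using $|I| \ge 2|C| \ge 2p$ should force this to be impossible unless $G$ is disconnected, contradiction — this is where connectivity and the exact constant $2/3$ (equivalently $|I|\ge 2|C|$) must be used tightly, and I expect this counting step, together with the bookkeeping that turns a "partial" hypertree (when some vertices of $I$ are low-degree or when $H$ splits) into the claimed $S, L$ with $N(L) = S$ exactly, to be the technical heart of the argument. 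The polynomial-time claim then follows from the algorithmic version of Theorem~\ref{thm:hypertree} and Lov\'asz's contraction procedure, both described in Section~\ref{sec:hyper}, plus the linear-time leaf-patching step.
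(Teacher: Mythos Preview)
Your strategy---build a hypergraph on $C=V\setminus I$ with one hyperedge $N(v)$ per $v\in I$, invoke Theorem~\ref{thm:hypertree}, and read off a bipartite tree by subdividing the contracted spanning tree---is the right engine. But there is a real gap at the point you yourself flag: the hypergraph on \emph{all} of $C$ need not be partition-connected, and no counting from $|I|\ge 2|C|$ together with connectivity of $G$ yields a contradiction, because edges inside $C$ can keep $G$ connected while being invisible to $H$. Concretely, take $C=\{a,b\}$ with $ab\in E(G)$ and $I=\{v_1,\dots,v_4\}$ each adjacent only to $a$: then $|I|=2n/3$, $G$ is connected, yet every hyperedge equals $\{a\}$ and the partition $\{\{a\},\{b\}\}$ has empty border. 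Your leaf-patching fallback also breaks: with $C=\{a,b\}$, $v_1$ adjacent to both $a$ and $b$, and $v_2,v_3,v_4$ adjacent only to $b$, the hypergraph \emph{is} partition-connected, the only hypertree subdivides to the path $a$--$v_1$--$b$, and the leaf $a$ has no neighbour in $I\setminus\{v_1\}$. Indeed no $L\subseteq I$ makes $a$ internal in $B(\{a,b\},L)$, so here $S=C$ is simply impossible and one is forced to take $S\subsetneq C$.

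The missing idea, which the paper supplies, is to first restrict to subsets $S'\subseteq C$ and $L'\subseteq I$ with $N(L')=S'$ and $S'$ having $L'$-expansion~$2$ (every $Z\subseteq S'$ satisfies $|N(Z)\cap L'|\ge 2|Z|$); this is extracted from $|I|\ge 2|C|$ via a standard expansion lemma. Your counting argument then does work, but as a \emph{recursion} rather than a contradiction: if the hypergraph on $S'$ is not partition-connected, a witnessing partition together with $2$-expansion forces some part $P_j$ to contain at least $2|P_j|$ hyperedges, and one re-applies the expansion lemma inside $P_j$. The $2$-expansion is also what replaces your leaf-patching: by Hall's theorem on a doubled bipartite graph it yields two edge-disjoint matchings saturating $S$, and swapping these ``favorite'' edges into the subdivided tree one at a time makes every vertex of $S$ internal while keeping all degrees on the $L$ side unchanged.
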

The proof of Lemma~\ref{lemma:combinatorial} is postponed to
 Section~\ref{sec:proofoflemma}. Now we give the description of  the kernelization algorithm and use Lemma~\ref{lemma:combinatorial} to prove its correctness.  The algorithm consists of the following reduction rules.

\medskip

\begin{description}
 \item[Rule 1] If $n\leq 3k$, then output $G$ and stop. In this case $G$ is a $3k$-vertex kernel.  Otherwise proceed with Rule~2.
\end{description}

\begin{description}
 \item[Rule 2] Choose an arbitrary vertex $v\in V$ and run a DFS (depth first search) from $v$. If the DFS tree $T$ has at least $k$ internal vertices, then the algorithm has found a solution and stops.
 Otherwise,  because $n> 3k$,  $T$ has at least $2n/3 +2$ leaves, and since all leaves but the root of the DFS tree are pairwise nonadjacent, the algorithm has found an independent set of $G$ of cardinality at least $2n/3$. Proceed with Rule~3.
\end{description}

\begin{description}
 \item[Rule~3 (reduction)] Find nonempty subsets of vertices $S,L \subseteq V$ as in Lem\-ma~\ref{lemma:combinatorial}.
Add a vertex $v_S$ and make it adjacent to every vertex in $N(S)\setminus L$ and add a vertex $v_L$ 
and make it adjacent to $v_S$. Finally, remove all vertices of $S\cup L$. 
 Let $G_R = (V_R, E_R)$ be the new graph and $k'=k-2|S|+2$. 
 Go to Rule~1 with  $G:=G_R$ and $k:=k'$. 
 \end{description}

To prove the soundness of Rule~3, we need the following lemma. Here, $S$ and $L$ are as in Lemma~\ref{lemma:combinatorial}. If $T$ is a tree and $X$ a vertex set, we denote by $\internal{T}{X}$ the number of vertices of $X$ that are internal in $T$.

\begin{lemma}\label{lem:goodopt}
If $G$ has a spanning tree with $k$ internal vertices, then $G$ has a spanning tree with at least $k$ internal vertices in which all the vertices of $S$ and exactly $|S|-1$ vertices of $L$ are internal.
\end{lemma}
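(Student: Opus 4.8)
The plan is to start from a spanning tree $T$ of $G$ with $k$ internal vertices and to replace its interaction with $L$ by the tree $T_{SL}$ of $B(S,L)$ supplied by Lemma~\ref{lemma:combinatorial} — in which all of $S$ and exactly $|S|-1$ vertices of $L$ are internal — grafting it onto a pruned copy of the rest of $T$. Write $A:=V\setminus(S\cup L)$ and let $F:=T-L=T[A\cup S]$, with components $C_1,\dots,C_m$. Since $N(L)=S$ and $L\neq\varnothing$, every component $C_i$ contains a vertex of $S$ (a component of $F$ meeting no vertex of $S$ would have no edge of $T$ leaving it); put $c_i:=|C_i\cap S|\ge 1$, so $\sum_i c_i=|S|$ and $\sum_i(c_i-1)=|S|-m$. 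The crucial numerical fact is that
$$\internal{T}{L}\ \le\ m-1 .$$
Indeed, since $L$ is independent and not adjacent to $A$, deleting the vertices of $L$ from $T$ deletes exactly the $S$--$L$ edges of $T$; so $F$ has $(n-1)-E$ edges, where $E$ is the number of these edges, and as $F$ has $|A|+|S|$ vertices and $m$ components this forces $E=|L|+m-1$, whereas $E=\sum_{\ell\in L}d_T(\ell)\ge(|L|-\internal{T}{L})+2\internal{T}{L}=|L|+\internal{T}{L}$.

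Now build $T'$. In each $C_i$ delete $c_i-1$ edges, each incident to a vertex of $S$, so that $C_i$ breaks into $c_i$ subtrees containing one vertex of $S$ each; this is always possible in a tree (fix a vertex of $S$ as root, repeatedly pick a deepest non-root vertex of $S$, and cut the edge joining it to its parent). Let $\mathcal F$ be the resulting spanning forest of $G[A\cup S]$: it has exactly $|S|$ components, one per vertex of $S$, and $|A|$ edges. Set $T':=\mathcal F\cup T_{SL}$; it has $|A|+(|S|+|L|-1)=n-1$ edges and is connected, because $T_{SL}$ spans $S\cup L$ and each component of $\mathcal F$ is attached to it through its unique vertex of $S$, so $T'$ is a spanning tree of $G$. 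As no edge of $\mathcal F$ meets $L$, we have $d_{T'}(\ell)=d_{T_{SL}}(\ell)$ for every $\ell\in L$, so exactly $|S|-1$ vertices of $L$ are internal in $T'$; as $d_{T'}(s)=d_{\mathcal F}(s)+d_{T_{SL}}(s)\ge d_{T_{SL}}(s)\ge 2$, all of $S$ is internal in $T'$; and $d_{T'}(a)=d_{\mathcal F}(a)$ for every $a\in A$.

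Finally we count internal vertices. For $a\in A$ we have $d_T(a)=d_F(a)$ and $d_{T'}(a)=d_{\mathcal F}(a)\le d_F(a)$, so a vertex of $A$ that is internal in $T$ but not in $T'$ must be incident to some deleted edge; since every deleted edge has an endpoint in $S$, it has at most one endpoint in $A$, and assigning to each such vertex of $A$ a deleted edge incident to it gives an injection into the set of deleted edges. Hence at most $|S|-m$ vertices of $A$ are lost, i.e.\ $\internal{T'}{A}\ge\internal{T}{A}-(|S|-m)$, and therefore
$$\internal{T'}{V}=\internal{T'}{A}+|S|+(|S|-1)\ \ge\ \internal{T}{A}+|S|+m-1\ \ge\ \internal{T}{A}+\internal{T}{S}+\internal{T}{L}=\internal{T}{V}=k ,$$
using $\internal{T}{S}\le|S|$ and $\internal{T}{L}\le m-1$. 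So $T'$ is a spanning tree with at least $k$ internal vertices in which all of $S$ and exactly $|S|-1$ vertices of $L$ are internal.

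I expect the real content to be the design of this surgery: breaking the cycles created by adding $T_{SL}$ back \emph{entirely inside $A\cup S$}, so that $T_{SL}$ is kept intact (and with it its $|S|-1$ internal vertices of $L$ and all its internal vertices of $S$), together with the bound $\internal{T}{L}\le m-1$, which is precisely what makes the final count go through; once these are in place the rest is routine bookkeeping.
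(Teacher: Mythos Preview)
Your proof is correct and follows essentially the same surgery as the paper's: delete the $L$--$S$ edges of $T$, cut further edges incident to $S$ until each component of $T[A\cup S]$ contains a single vertex of $S$, and glue in the tree $T_{SL}$ from Lemma~\ref{lemma:combinatorial}. Your write-up is in fact more explicit than the paper's on two points --- you isolate the key inequality $\internal{T}{L}\le m-1$ via the edge count $\sum_{\ell\in L}d_T(\ell)=|L|+m-1$, and you give a concrete rooting procedure guaranteeing each cut edge meets $S$ (so no $A$-only component is created) --- but the underlying argument and arithmetic are the same as in the paper, which carries the parameter $\sum_{u\in L}d_T(u)$ through a single chain of inequalities instead of naming $m$.
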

\begin{proof}
 Let $T$ be a spanning tree of $G$ with $k$ internal vertices.
 Denote by $F$ the forest obtained from $T$ by removing all edges incident to $L$.
 Then, as long as $2$ vertices $u,v \in S$ are in the same connected component in $F$, remove an edge from $F$ that is incident to one of these two vertices and belongs to the $u$--$v$ path in $F$.
 Observe that in $F$, each vertex from $V\setminus (L\cup S)$ is in the same connected component as some vertex from $S$. Indeed, we only removed an edge $uw$ incident to a vertex
 $w\in V\setminus (L\cup S)$ in case $u,v\in S$ and there was a $u$--$v$ path containing $w$. After removing $uw$, $w$ is still in the same connected component as $v$.
 Now, obtain the spanning tree $T'$ by adding the edges of a spanning tree of $B(S,L)$ to $F$ in which all vertices of $S$ and $|S|-1$ vertices of $L$ are internal (see Lemma~\ref{lemma:combinatorial}).
 Clearly, all vertices of $S$ and $|S|-1$ vertices of $L$ are internal in $T'$. It remains to show that $T'$ has at least as many internal vertices as $T$.

Let $U:=V\setminus(S\cup L)$. Then, we have that $\internal{T}{L} \le \sum_{u\in L} d_T(u) - |L|$ as every vertex in a tree has degree at least $1$ and internal vertices have degree at least $2$. We also have $\internal{T'}{U} \ge \internal{T}{U}-(|L|+|S|-1-\sum_{u\in L}d_T(u))$ as at most $|S|-1-(\sum_{u\in L}d_T(u)-|L|)$ edges incident to $S$ are removed from $F$ to separate $F\setminus L$ into $|S|$ connected components, one for each vertex of $S$.
Thus,
\begin{eqnarray*}
\internal{T'}{V} &=& \internal{T'}{U} +  \internal{T'}{S\cup L} \\
    & \geq & \internal{T}{U}-(|L|+|S|-1-\sum_{u\in L}d_T(u)) + \internal{T'}{S\cup L}\\
    & = & \internal{T}{U} + (\sum_{u\in L}d_T(u) - |L|) -|S|+1 + \internal{T'}{S\cup L} \\
    & \geq &  \internal{T}{U} + \internal{T}{L} -|S|+1 + \internal{T'}{S\cup L} \\
    & = &  \internal{T}{U} + \internal{T}{L}  -(|S|-1) + (|S|+|S|-1)\\
    & = & \internal{T}{U} + \internal{T}{L} +|S| \\
    & \geq & \internal{T}{U} + \internal{T}{L} + \internal{T}{S}  \\
    & = & \internal{T}{V}. 
 \end{eqnarray*}
This finishes the proof of the lemma.
\end{proof}

\begin{lemma}\label{lem:correct}
 Rule~3 is sound,  $|V_R| < |V|$, and $k'\leq k$. 
\end{lemma}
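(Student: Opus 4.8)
The plan is to verify the three assertions in increasing order of difficulty. The inequality $k' \le k$ is immediate: by construction $k' = k - 2|S| + 2$, and since $S$ is nonempty we have $|S| \ge 1$, hence $-2|S| + 2 \le 0$ and $k' \le k$. The size bound $|V_R| < |V|$ is almost as quick: Rule~3 deletes the $|S| + |L|$ vertices of $S \cup L$ and adds back exactly two vertices $v_S, v_L$, so $|V_R| = |V| - |S| - |L| + 2$; since $S$ is nonempty and, by Lemma~\ref{lemma:combinatorial}(ii), $B(S,L)$ contains a spanning tree with $|S|-1$ internal vertices of $L$ (in particular $|L| \ge |S| \ge 1$, and actually $|S| + |L| \ge 2$ with equality only if $|S|=|L|=1$), we get $|S| + |L| \ge 2$ and therefore $|V_R| \le |V|$; to get the strict inequality I would note that if $|S| + |L| = 2$ then $|S| = |L| = 1$, and the single edge $B(S,L)$ together with the added gadget would not actually decrease the count — so I should double-check whether $|S|=|L|=1$ can occur, and if so argue separately (e.g. that in that degenerate case $N(L) = S$ is a single vertex of degree... ). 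In fact the cleaner route is: since $I$ has cardinality at least $2n/3 \ge 2$ and $L \subseteq I$, $S = N(L) \subseteq V \setminus I$; if $|S| = 1$ then that one vertex, to keep $G$ connected while $L$ could be large, still gives $|S|+|L| \ge 2$; I will confirm $|S| + |L| \ge 3$ is not needed and that $\ge 2$ with the gadget replacing it by $2$ already yields $|V_R| \le |V|$, then rule out equality by the observation that $L \ne \emptyset$ forces at least one deleted vertex with $|S| \ge 1$, so $|S|+|L|\ge 2$, and equality $|V_R|=|V|$ needs $|S|=|L|=1$; I'd handle that corner case by observing the algorithm can be assumed to pick $S,L$ maximal or simply that $|I| \ge 2n/3 > 2$ for $n \ge 4$ (and $n > 3k \ge 3$ whenever Rule~3 is invoked) forces $|L|$ large enough. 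I expect this to be the only fiddly point.

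For soundness I must show $G$ has a spanning tree with at least $k$ internal vertices if and only if $G_R$ has a spanning tree with at least $k' = k - 2|S| + 2$ internal vertices. For the forward direction, by Lemma~\ref{lem:goodopt} I may assume the spanning tree $T$ of $G$ has $\ge k$ internal vertices with all of $S$ and exactly $|S|-1$ vertices of $L$ internal. I then contract $S \cup L$: the edges of $T$ inside $G[S \cup L]$ together with the edges from $S$ to $V \setminus (S \cup L)$ (note $N(L) = S$ by (i), so $L$ has no neighbours outside $S \cup L$) form the relevant structure, and in $G_R$ the vertex $v_S$ inherits the neighbourhood $N(S) \setminus L$; I construct a spanning tree $T_R$ of $G_R$ by replacing the part of $T$ spanning $S \cup L$ with the path $v_L$–$v_S$ and attaching $v_S$ to the external edges $T$ used from $S$. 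The internal-vertex count drops by exactly the $2|S| - 2$ internal vertices that lived in $S \cup L$ (namely $|S|$ from $S$ and $|S|-1$ from $L$) minus the possibly one new internal vertex $v_S$ — so $\internal{T_R}{V_R} \ge \internal{T}{V} - (|S| + |S| - 1) + 1 = \internal{T}{V} - 2|S| + 2 \ge k'$.

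For the reverse direction I start with a spanning tree $T_R$ of $G_R$ with $\ge k'$ internal vertices; here the key structural point is that $v_L$ is a leaf (its only neighbour is $v_S$), so $v_L$ contributes nothing and $v_S$ is its only possible "wasted" internal vertex. I expand $v_S$ back into the tree on $B(S,L)$ guaranteed by Lemma~\ref{lemma:combinatorial}(ii), which makes all of $S$ and $|S|-1$ vertices of $L$ internal, reattaching the external edges that $v_S$ used (these go into $N(S) \setminus L$, hence land on genuine vertices of $S$'s neighbourhood in $G$, using that $N(L) = S$ guarantees the pieces hang together correctly). This gains $2|S|-1$ internal vertices and loses at most the one vertex $v_S$, so $\internal{T}{V} \ge \internal{T_R}{V_R} - 1 + (2|S|-1) = \internal{T_R}{V_R} + 2|S| - 2 \ge k' + 2|S| - 2 = k$. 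The main obstacle — and the place I'd spend the most care — is bookkeeping the external attachment edges in both directions: making sure that when $v_S$ is contracted the edges of $T$ from $S$ to the outside map to valid edges of $v_S$ in $G_R$ (they do, since such neighbours lie in $N(S) \setminus L$), and conversely that re-expanding keeps the graph connected and acyclic, which relies precisely on $N(L) = S$ so that $L$ never needed an external edge. I also want to make sure the "$-1$" for $v_S$ is not double-counted against the "$+1$" appearing when I re-route, i.e. that the inequalities are tight in the right direction; this is routine once the correspondence between the two trees is pinned down.
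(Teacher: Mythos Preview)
Your plan is correct and follows essentially the same route as the paper: Lemma~\ref{lem:goodopt} plus contraction of $S\cup L$ to $v_S,v_L$ for the forward direction, and expansion of $v_S,v_L$ back into the spanning tree of $B(S,L)$ from Lemma~\ref{lemma:combinatorial} for the reverse, with the key bookkeeping point being that degrees of vertices in $N_G(S)\setminus L$ are preserved in both constructions.

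The only place where you are needlessly uncertain is the strict inequality $|V_R|<|V|$. The paper dispatches this in one line by noting $|S\cup L|\ge 3$, and you can too: condition~(ii) of Lemma~\ref{lemma:combinatorial} requires every vertex of $S$ to be internal in the spanning tree of $B(S,L)$, so if $|S|=1$ that single vertex must have degree at least~$2$, forcing $|L|\ge 2$; and if $|S|\ge 2$ then $|S|+|L|\ge 3$ already since $L$ is nonempty. (Alternatively, the $L$-expansion~$2$ property of $S$ established in Section~\ref{sec:proofoflemma} gives $|L|\ge 2|S|\ge 2$ outright.) There is no need to invoke $n>3k$ or maximality of $S,L$.
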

%
\begin{proof}
We claim first that the resulting graph $G_R = (V_R, E_R)$ has a spanning tree with at least $k'=k-2|S|+2$ internal vertices if and only if the original graph $G$ has a spanning tree with at least $k$ internal vertices. 
Indeed, assume $G$ has a spanning tree with $\ell \ge k$ internal vertices. Then, let $B(S,L)$ be as in Lemma~\ref{lemma:combinatorial} and $T$ be a spanning tree of $G$ with $\ell$ internal vertices such that all vertices of $S$ and $|S|-1$ vertices of $L$ are internal (which exists by Lemma~\ref{lem:goodopt}). Because $T[S\cup L]$ is connected, every two distinct vertices $u,v \in N_T (S) \setminus L$ are in different connected components of $T\setminus (L\cup S)$. But this means that the graph $T'$ obtained from $T\setminus (L\cup S)$ by connecting $v_S$ to all neighbors of $S$ in $T\setminus (S\cup L)$ is also a tree in which the degree of every vertex in $N_G(S)\setminus L$ is unchanged. The graph $T''$ obtained from $T'$ by adding $v_L$ and connecting $v_L$ to $v_S$ is also a tree. Then $T''$ has exactly $\ell -2|S| +2$ internal vertices.

In the opposite direction, if $G_R$ has a tree $T''$ with $\ell -2|S| +2$ internal vertices, then all neighbors of $v_S$ in $T''$ are in different components of $T'' \setminus \{v_S\}$. By Lemma~\ref{lemma:combinatorial} we know that  $B(S,L)$ has a spanning tree $T_{SL}$ such that all the vertices of $S$ and $|S|-1$ vertices of $L$ are internal. We obtain a spanning tree $T$ of $G$ by 
considering the forest  $T^*=T'' \setminus \{v_S,v_L\} \cup T_{SL}$ and adding edges between different components to make it connected. For each vertex $u \in N_{T''}(v_S)\setminus \{v_L\}$, add an edge $uv$ to 
$T^*$, where $uv$ is an edge of $G$ and $v\in S$. By construction we know that such an edge always exists. 
Moreover, the degrees of the vertices in $N_G(S)\setminus L$ are the same in $T$ as in $T''$.
Thus $T$ is a spanning tree with $\ell$ internal vertices. 

Finally, as $|S|\geq 1$ and $|L\cup S|\geq 3$, we have that $|V_R| < |V|$ and $k'\leq k$.
\end{proof}

Thus Rule~3 compresses the graph and we conclude with the following theorem.
\begin{theorem}\label{thm:kernel}
\pIST has a $3k$-vertex kernel.
\end{theorem}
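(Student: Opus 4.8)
The plan is to show that the three reduction rules together constitute a kernelization algorithm for \pIST producing an equivalent instance on at most $3k$ vertices, and that it runs in polynomial time. First I would observe that Rule~1 is the termination condition: whenever $n \le 3k$ we output $G$ unchanged, which is by definition a $3k$-vertex kernel. So it suffices to argue that when $n > 3k$ the algorithm makes progress, either by solving the instance outright or by strictly shrinking it while preserving the answer. Rules~2 and~3 handle exactly this case.

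\medskip

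Next I would verify that Rule~2 is well-defined and correct in the regime $n > 3k$. Running a DFS from an arbitrary vertex yields a DFS tree $T$; if $T$ has at least $k$ internal vertices we have found a spanning tree witnessing a YES-instance and we stop. Otherwise $T$ has at least $n-k+1 > n - n/3 = 2n/3$ leaves (the claim in the excerpt states ``at least $2n/3+2$ leaves'', which follows from $n > 3k$, i.e. $k < n/3$, so the number of non-leaves is at most $k-1 < n/3 - 1$). A key structural fact about DFS trees is that any two leaves are non-adjacent in $G$ unless one is an ancestor of the other — but a leaf has no descendants, so in fact no two leaves other than possibly the root are adjacent; hence the leaves (minus the root, if it happens to be a leaf) form an independent set $I$ of size at least $2n/3$. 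This puts us in position to invoke Lemma~\ref{lemma:combinatorial}.

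\medskip

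The heart of the argument is Rule~3, whose soundness is established by the two lemmas already proved: Lemma~\ref{lemma:combinatorial} supplies nonempty $S \subseteq V \setminus I$ and $L \subseteq I$ with $N(L) = S$ and a suitable spanning tree of $B(S,L)$, all in polynomial time; Lemma~\ref{lem:goodopt} together with the subsequent ``Rule~3 is sound'' lemma shows that the reduced graph $G_R$ with parameter $k' = k - 2|S| + 2$ satisfies: $G$ has a spanning tree with $\ge k$ internal vertices iff $G_R$ has one with $\ge k'$ internal vertices, and moreover $|V_R| < |V|$ and $k' \le k$. Thus each application of Rule~3 strictly decreases the number of vertices while keeping $k' \le k$ and preserving the answer, so after finitely many rounds we reach $n \le 3k$ and halt via Rule~1. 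Since each round — DFS, finding $S$ and $L$ via Lemma~\ref{lemma:combinatorial}, and the local surgery of Rule~3 — takes polynomial time, and there are at most $n$ rounds, the total running time is polynomial. I do not anticipate a serious obstacle here: all the real work (the combinatorial lemma and the exchange argument for optimality of the structured spanning tree) has been discharged in the preceding lemmas, so the proof of Theorem~\ref{thm:kernel} is essentially an assembly of these pieces plus the termination and polynomiality bookkeeping. If anything, the one point requiring a little care is confirming that the DFS-leaf independent set genuinely has size $\ge 2n/3$ rather than $2n/3$ only up to the root, but discarding the root costs at most one vertex and the strict inequality $n > 3k$ gives the needed slack.

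\begin{proof}
By Rule~1, if $n \le 3k$ the algorithm outputs $G$ itself, which is trivially a $3k$-vertex kernel. So suppose $n > 3k$. The algorithm runs a DFS from an arbitrary vertex $v$, obtaining a DFS tree $T$. If $T$ has at least $k$ internal vertices, then $T$ is a spanning tree certifying that $G$ has a spanning tree with at least $k$ internal vertices, and the algorithm correctly reports a YES-instance. Otherwise $T$ has at most $k-1$ internal vertices, hence at least $n-k+1$ leaves; since $n > 3k$ we have $k < n/3$, so the number of leaves is at least $n - k + 1 > 2n/3 + 1$. In a DFS tree, a leaf has no descendants, so any two leaves are non-adjacent in $G$ except possibly for edges incident to the root; therefore the set $I$ of all leaves of $T$ other than the root is an independent set of $G$ with $|I| \ge 2n/3$.

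We now apply Lemma~\ref{lemma:combinatorial} to $G$ and $I$, obtaining in polynomial time nonempty sets $S \subseteq V \setminus I$ and $L \subseteq I$ with $N(L) = S$ and a spanning tree of $B(S,L)$ in which all vertices of $S$ and $|S|-1$ vertices of $L$ are internal. Rule~3 then constructs $G_R$ and sets $k' = k - 2|S| + 2$. By the lemma preceding Theorem~\ref{thm:kernel}, Rule~3 is sound: $G$ has a spanning tree with at least $k$ internal vertices if and only if $G_R$ has a spanning tree with at least $k'$ internal vertices; moreover $|V_R| < |V|$ and $k' \le k$.

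Thus each pass through Rules~1--3 either solves the instance or produces a strictly smaller equivalent instance with parameter at most $k$. Since $|V|$ strictly decreases with each application of Rule~3 and $|V| \ge 1$, after at most $n$ passes the algorithm reaches the case $n \le 3k$ and halts via Rule~1, outputting a graph on at most $3k$ vertices together with a parameter $k' \le k$ such that the original instance is a YES-instance if and only if the reduced instance is. Each pass — a DFS, one invocation of the polynomial-time procedure of Lemma~\ref{lemma:combinatorial}, and the constant-size local modification of Rule~3 — runs in time polynomial in the size of the current graph, and there are at most $n$ passes, so the whole procedure runs in polynomial time. Hence \pIST has a $3k$-vertex kernel.
\end{proof}
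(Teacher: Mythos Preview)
Your proof is correct and follows exactly the approach of the paper; in fact the paper gives no separate proof of Theorem~\ref{thm:kernel} at all, simply stating that ``Rule~3 compresses the graph'' and concluding. You have written out the bookkeeping (termination, polynomiality, and the DFS-leaf independent-set count) that the paper leaves implicit, and your arithmetic for $|I|\ge 2n/3$ is cleaner than the paper's slightly loose ``$2n/3+2$ leaves'' claim.
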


\begin{corollary}
\pIST can be solved in time $8^k \cdot n^{O(1)}$.
\end{corollary}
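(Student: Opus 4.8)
The plan is to combine the kernelization with a brute-force exact algorithm on the kernel. First, run the kernelization algorithm of Section~\ref{sec:kern_algorithm} on the input $(G,k)$. As analysed there it runs in time polynomial in the size of $G$ and either already settles \pIST (in Rule~2, when the DFS tree has at least $k$ internal vertices) or returns an equivalent instance $(G_R,k')$ with $k'\le k$ and, by Theorem~\ref{thm:kernel}, with $N:=|V(G_R)|\le 3k'\le 3k$. (If $k>|V(G)|$ the answer is trivially ``no'', so we may assume $k\le |V(G)|$ and hence $3k=O(n)$, which makes the polynomial factors below polynomial in $n=|V(G)|$.) It then remains only to decide, for the fixed $N$-vertex graph $G_R$, whether it has a spanning tree with at least $k'$ internal vertices.

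For this second step I would use an exact algorithm that computes, on any $N$-vertex graph, the maximum number of internal vertices of a spanning tree in time $2^{N}\cdot N^{O(1)}$. Such an algorithm is routine: for a vertex subset $W$, the number of spanning trees in which every vertex of $W$ is internal can be written by inclusion--exclusion over $W$ in terms of the numbers of spanning trees of vertex-deleted subgraphs, each of which the Matrix--Tree theorem computes in polynomial time; assembling these counts for all $W$ via a fast zeta transform over the subset lattice takes $2^{N}\cdot N^{O(1)}$ time, and the answer is the largest $|W|$ whose count is nonzero. Applied to $G_R$ this costs at most $2^{3k}\cdot(3k)^{O(1)}=8^{k}\cdot k^{O(1)}$, and together with the polynomial-time kernelization this gives the claimed bound $8^{k}\cdot n^{O(1)}$.

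The only ingredient here that is not pure bookkeeping is the $2^{N}\cdot N^{O(1)}$ exact algorithm, and the subtlety it has to work around is that one cannot push the exponential factor into a polynomial per-subset test: for a fixed pair $\{s,t\}$, a spanning tree in which every vertex other than $s$ and $t$ is internal is precisely a Hamiltonian $s$--$t$ path, so deciding, for a fixed candidate leaf set $B$, whether some spanning tree has all of $V(G_R)\setminus B$ internal is already \textsc{NP}-hard. Thus the $2^{N}$ work is genuinely necessary; the role of Theorem~\ref{thm:kernel} is precisely to drive $N$ down to $3k$ so that $2^{N}$ becomes $8^{k}$. This is the step I would expect to need the most care to present cleanly, even though it uses only standard tools.
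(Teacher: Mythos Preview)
Your proof is correct and follows the same overall strategy as the paper: apply Theorem~\ref{thm:kernel} to obtain a kernel on at most $3k$ vertices, then run a $2^{N}\cdot N^{O(1)}$ exact algorithm on the kernel. The paper simply cites Nederlof's $2^{n}\cdot n^{O(1)}$ algorithm~\cite{Nederlof09} for this second step, whereas you supply your own self-contained routine based on inclusion--exclusion over leaf sets, the Matrix--Tree theorem, and a fast zeta transform.

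Two small remarks on your sketch. First, the count you need for a set $S$ is not literally the number of spanning trees of the vertex-deleted subgraph $G-S$: the number of spanning trees of $G$ in which every vertex of $S$ is a leaf equals $\tau(G-S)\cdot\prod_{v\in S}|N_G(v)\setminus S|$, so there is an extra product of cross-degrees. This is still computable in polynomial time per subset and changes nothing in the analysis, but the phrasing ``in terms of the numbers of spanning trees of vertex-deleted subgraphs'' hides this factor. Second, the fast zeta transform (or, equivalently, a binomial-inversion argument summing $h(S)$ by~$|S|$) is indeed what brings the total down from $3^{N}$ to $2^{N}\cdot N^{O(1)}$; your remark that a per-subset feasibility test would not suffice is exactly the point. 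With these details spelled out, your argument is a perfectly valid alternative to invoking~\cite{Nederlof09}.
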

\begin{proof}
Obtain a $3k$-vertex kernel for the input graph $G$ in polynomial time using Theorem~\ref{thm:kernel} and run the $2^n n^{O(1)}$ time algorithm of Nederlof \cite{Nederlof09} on the kernel. 
\end{proof}

\section{Proof of Lemma~\ref{lemma:combinatorial}}\label{sec:proofoflemma}

In this section we provide the postponed proof of Lemma~\ref{lemma:combinatorial}.
Let $G=(V,E)$ be a connected graph on $n$ vertices, $I$ be an independent set of $G$ of cardinality at least $2n/3$ and $C:=V\setminus I$.

Let $Y$ be a subset of $V$. A subset $X\subseteq (V\setminus Y)$ has \emph{$Y$-expansion} $c$, for some $c> 0$, if for each subset $Z$ of $X$, $|N(Z)\cap Y| \ge c \cdot |Z|$.
We first find an independent set $L\subseteq I$ whose neighborhood has $L$-expansion $2$.
For this, we need the following result.

%

\begin{lemma}[\cite{Thomasse09}]\label{thm:expansion}
Let $B$ be a nonempty bipartite graph with vertex bipartition $(X, Y)$ with $|Y| \ge 2|X|$ and such that every vertex of $Y$ has at least one neighbor in $X$. Then there exist nonempty subsets $X' \subseteq X$ and $Y' \subseteq Y$ such that the set of neighbors of $Y'$ in $B$ is exactly $X'$, and such that 
$X'$ has $Y'$-expansion 2. 
Moreover,  such subsets $X',Y'$ can be found in time polynomial in the size of $B$.
\end{lemma}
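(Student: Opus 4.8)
\medskip

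My plan is to prove the statement by induction on $|X|$, using the following reformulation of $2$-expansion as a Hall condition. From $B$ I build a bipartite graph $\widehat{B}$ by replacing each $x\in X$ with two ``twins'' sharing the neighbourhood $N_B(x)$, and I write $\widehat{X}$ for the resulting side, which has size $2|X|\le|Y|$. The key fact I will use is that $X$ has $Y$-expansion $2$ in $B$ \emph{if and only if} $\widehat{B}$ has a matching saturating $\widehat{X}$: a set $\widehat{A}\subseteq\widehat{X}$ with $|N_{\widehat{B}}(\widehat{A})|<|\widehat{A}|$ projects to the set $Z_0\subseteq X$ of vertices having a twin in $\widehat{A}$, and then $|N_B(Z_0)|=|N_{\widehat{B}}(\widehat{A})|<|\widehat{A}|\le 2|Z_0|$; conversely a set $Z$ with $|N_B(Z)|<2|Z|$ makes its $2|Z|$ twins violate Hall's condition. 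Since the saturating-matching question on $\widehat{B}$ is decidable in polynomial time and, when it fails, a K\"{o}nig minimum vertex cover exhibits such a $Z_0$ explicitly, this reformulation simultaneously delivers the algorithmic claim.

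The base case (and the ``good'' case in general) is when $X$ itself has $Y$-expansion $2$ in $B$: then I output $(X',Y')=(X,Y)$, because $2$-expansion is exactly condition~(ii), and instantiating it at singletons shows every $x\in X$ has a neighbour in $Y$, so $N_B(Y)=X$, which is~(i). This in particular disposes of $|X|=1$, where $|Y|\ge2$ and, since every vertex of $Y$ has a neighbour in $X$, all of $Y$ is adjacent to the unique element of $X$.

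Otherwise the reformulation hands me a nonempty $Z_0\subseteq X$ with $|N_B(Z_0)|\le2|Z_0|-1$, and the crucial observation is that $Z_0$ must be a \emph{proper} subset of $X$: if $Z_0=X$ then $N_B(X)$ would omit some vertex of $Y$, contradicting that every vertex of $Y$ has a neighbour in $X$; hence $N_B(X)=Y$ has size $\ge 2|X|$ and $X$ is not a violator. Now put $Y_0:=N_B(Z_0)$ and $B_1:=B[(X\setminus Z_0)\cup(Y\setminus Y_0)]$. Each $y\in Y\setminus Y_0$ has no neighbour in $Z_0$ but has one in $X$, hence one in $X\setminus Z_0$, and $|Y\setminus Y_0|\ge|Y|-(2|Z_0|-1)\ge 2|X\setminus Z_0|+1$; so $B_1$ is a nonempty bipartite graph on $(X\setminus Z_0,\,Y\setminus Y_0)$ satisfying the hypotheses, with a strictly smaller left side. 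Applying the induction hypothesis to $B_1$ yields nonempty $X'\subseteq X\setminus Z_0$ and $Y'\subseteq Y\setminus Y_0$ with $N_{B_1}(Y')=X'$ and $X'$ of $Y'$-expansion $2$ in $B_1$. Since every vertex of $Y\setminus Y_0$ has all of its $B$-neighbours inside $X\setminus Z_0$, the graphs $B$ and $B_1$ induce the same bipartite graph between $X\setminus Z_0$ and $Y\setminus Y_0$, so $N_B(Y')=X'$ and $X'$ has $Y'$-expansion $2$ in $B$ as well, giving the required pair. The recursion drops $|X|$ by at least one per step, and each step (a bipartite matching, a K\"{o}nig cover, forming the induced subgraph) is polynomial, so the whole procedure runs in polynomial time.

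The argument has no serious obstacle once the matching reformulation is in place; the one point requiring care is guaranteeing that the extracted violator $Z_0$ is a \emph{proper} subset of $X$, so that the induction strictly descends — this is exactly where one must use the assumption ``every vertex of $Y$ has a neighbour in $X$'' rather than merely the counting bound $|Y|\ge2|X|$ — together with the routine check that the pair returned for $B_1$ is still valid for $B$.
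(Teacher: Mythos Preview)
Your proof is correct. The paper does not actually prove this lemma: it is quoted verbatim from \cite{Thomasse09} and used as a black box, so there is no ``paper's own proof'' to compare against. Your argument---doubling $X$ to turn $2$-expansion into an ordinary Hall condition, then, upon failure, extracting a Hall violator $Z_0\subsetneq X$, passing to the induced bipartite graph on $(X\setminus Z_0,\,Y\setminus N_B(Z_0))$, and recursing---is precisely the standard proof of this expansion lemma and is essentially the argument given in \cite{Thomasse09}. The two delicate points (that $Z_0$ is proper, using that every $y\in Y$ has a neighbour in $X$; and that neighbourhoods of $Y'$ agree in $B$ and in $B_1$ because $Y\setminus Y_0$ has no edges to $Z_0$) are both handled correctly, and the polynomial running time follows as you say from the at most $|X|$ recursion levels, each dominated by a bipartite matching computation.
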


By using Lemma~\ref{thm:expansion}, we 
find nonempty sets of vertices $S' \subseteq C$ and $L' \subseteq I$ such that $N(L')=S'$ and $S'$ has $L'$-expansion $2$.

\begin{lemma}\label{lem:Sconnected}
Let $G=(V,E)$ be a connected graph and $L'\subseteq V$ be an independent set such that $S'=N(L')$ has $L'$-expansion $2$.
Then there exist nonempty subsets $S\subseteq S'$ and $L\subseteq L'$ such that 
 \begin{itemize}
\setlength{\itemsep}{-1pt}
 \item $B(S,L)$ has a spanning tree in which all the vertices of $L$ have degree at most $2$, 
 \item $S$ has $L$-expansion $2$, and 
 \item $N(L)=S$.
 \end{itemize}
 Moreover, such sets $S$ and $L$ can be found in time polynomial in the size of $G$.
\end{lemma}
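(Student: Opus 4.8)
The plan is to restate the spanning-tree condition in the language of the hypergraphic matroid, and then to obtain $(S,L)$ from $(S',L')$ by an iterative shrinking in which each step invokes Lemma~\ref{thm:expansion} again, on a well-chosen sub-instance.

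First I would set up the following reformulation. For $L\subseteq L'$ with $N(L)=S$, let $H_L:=(S,\{N(\ell):\ell\in L\})$ be the hypergraph whose hyperedges are the neighbourhoods $N(\ell)$, $\ell\in L$, counted with multiplicity; note that $\emptyset\ne N(\ell)\subseteq S$ for each $\ell\in L$, since $N(L)=S$ forces $N(\ell)\subseteq\bigcup_{\ell'\in L}N(\ell')=S$, and $\ell$ has a neighbour because $L'$ was produced by Lemma~\ref{thm:expansion}. I claim that $B(S,L)$ has a spanning tree in which every vertex of $L$ has degree at most $2$ if and only if $H_L$ contains a hypertree. For the ``if'' direction, a hypertree uses $|S|-1$ of the hyperedges, say those coming from $L_2\subseteq L$; by Lov\'asz's contraction result (Section~\ref{sec:hyper}) one can choose a pair $e_\ell\subseteq N(\ell)$ for each $\ell\in L_2$ so that $\{e_\ell:\ell\in L_2\}$ is a spanning tree of $S$, and then routing each edge $e_\ell$ through $\ell$ (a path of length two) and hanging every $\ell\in L\setminus L_2$ as a pendant leaf on an arbitrary vertex of $N(\ell)$ yields the desired tree. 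Conversely, given such a tree, deleting the degree-$1$ vertices of $L$ leaves a subdivision of a spanning tree of $S$ in which the degree-$2$ vertices of $L$ are exactly the subdivision vertices, and contracting them exhibits a hypertree of $H_L$ (again by Lov\'asz's result). By Theorem~\ref{thm:hypertree}, the first bullet of the lemma is therefore equivalent to $H_L$ being partition-connected, so it suffices to produce nonempty $L\subseteq L'$ with $S:=N(L)$ such that $S$ has $L$-expansion $2$ and $H_L$ is partition-connected.

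The shrinking loop starts at $(S,L)=(S',L')$, which has $N(L)=S$ and $L$-expansion $2$ by hypothesis. If $H_L$ is partition-connected we stop and build the tree as above. Otherwise the algorithm accompanying Theorem~\ref{thm:hypertree} returns a partition $\mathcal P=\{V_1,\dots,V_p\}$ of $S$ with $|\delta(\mathcal P)|\le p-2$, whence $p\ge 2$. Putting $L_i:=\{\ell\in L:N(\ell)\subseteq V_i\}$, we have $\sum_i|L_i|=|L|-|\delta(\mathcal P)|\ge 2|S|-(p-2)=\sum_i 2|V_i|-(p-2)$, so some part satisfies $|L_i|\ge 2|V_i|$ (otherwise $\sum_i|L_i|\le\sum_i(2|V_i|-1)=2|S|-p$, a contradiction). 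Fix such a part and set $S_i:=N(L_i)\subseteq V_i$: then $L_i\ne\emptyset$, every vertex of $L_i$ has a nonempty neighbourhood contained in $S_i$, and $|L_i|\ge 2|V_i|\ge 2|S_i|$, so Lemma~\ref{thm:expansion} applies to $B(S_i,L_i)$ (with $L_i$ the large side) and returns nonempty $\tilde S\subseteq S_i$, $\tilde L\subseteq L_i$ with $N(\tilde L)=\tilde S$ and $\tilde S$ having $\tilde L$-expansion $2$. Since $p\ge 2$ we have $S_i\subseteq V_i\subsetneq S$, hence $|\tilde S|<|S|$; we replace $(S,L)$ by $(\tilde S,\tilde L)$ and repeat.

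The loop terminates because $|S|$ strictly decreases and, when $|S|=1$, the one-vertex hypergraph $H_L$ is trivially partition-connected, so we stop; the final pair then meets all three requirements, and it lies inside $(S',L')$ because each replacement only shrinks $S$ and $L$. Every iteration runs in polynomial time---partition-connectedness, a bad partition, and a contracted spanning tree are all provided by the algorithms described after Theorem~\ref{thm:hypertree}, and the inner step is the polynomial procedure of Lemma~\ref{thm:expansion}---and there are at most $|S'|$ iterations. The steps I expect to need the most care are the two directions of the hypergraph reformulation (correctly invoking Lov\'asz's contraction statement both ways) and the realization that Lemma~\ref{thm:expansion} can be re-applied inside a part to restore the expansion-$2$ invariant after shrinking; the counting that locates a part with $|L_i|\ge 2|V_i|$ is then routine.
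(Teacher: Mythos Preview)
Your proposal is correct and follows essentially the same approach as the paper: define the hypergraph on $S$ whose hyperedges are the neighbourhoods of vertices in $L$, use Theorem~\ref{thm:hypertree} (and Lov\'asz's contraction result) to reduce the first bullet to partition-connectedness, and when it fails, locate a part $V_i$ with at least $2|V_i|$ hyperedges inside via the same counting argument, then re-apply Lemma~\ref{thm:expansion} there and recurse. The only cosmetic differences are that you phrase the recursion as an explicit shrinking loop and you spell out the ``iff'' of the hypergraph reformulation more carefully than the paper does.
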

\begin{proof}
The proof is by induction on $|S'|$. If $|S'|=1$, the lemma holds with $S:=S'$ and $L:=L'$.
 Let $H=(S',E')$ be the hypergraph with edge set $E'=\{N(v) \mid v\in L'\}$. If $H$ contains a hypertree, then it has $|S'|-1$ hyperedges and we can obtain a tree  $T_{S'}$ on $S'$ by shrinking
edges. We use this to find a subtree $T'$ of $B(S',L')$ spanning $S'$ as follows: for every edge $e=uv$ of $T_{S'}$ there exists a hyperedge corresponding to it and hence a unique vertex, say $w$, in $L'$; we 
delete the edge $e=uv$ from $T_{S'}$ and add the edges $wu$ and $wv$ to $T_{S'}$. Observe that the resulting subtree $T'$ of $B(S',L')$ has the property that every vertex in $T'$ which is in 
$L'$ has degree $2$ in it. Finally, we extend $T'$ to a spanning tree of $B(S',L')$ by adding the remaining vertices of $L'$ as pending vertices. All this can be done in polynomial time using the 
algorithm in Section~\ref{sec:hyper}. 
Thus $S'$ and $L'$ are the sets of vertices we are looking for.
 Otherwise, if $H$ does not contain a hypertree, then $H$ is not partition-connected by Theorem~\ref{thm:hypertree}. Then we can find a partition $\mathcal{P}=\{P_1,P_2,\ldots,P_\ell\}$ of $S'$ such that its border $\delta(\mathcal{P})$ contains at most $\ell-2$ hyperedges of $H$ in polynomial time. Let $b_i$ be the number of hyperedges completely contained in $P_i$, where $1\le i\le \ell$. Then there exists an $i, 1\le i\le \ell,$ such that $b_i \ge 2|P_i|$. Indeed, otherwise $|L'|\le (\ell-2)+\sum_{i=1}^\ell (2|P_i|-1)<2|S'|$, which contradicts
 that $S'$ has $L'$-expansion $2$. Let $X:=P_j$ and  
 $Y:=\{w\in L'|~N(w)\subseteq P_j\}$. 
We know that $|Y|\geq  2|X|$ and hence by Lemma~\ref{thm:expansion} there exists a 
$S^*\subseteq X$ and $L^*\subseteq Y$ such that $S^*$ has $L^*$-expansion $2$ and $N(L^*)=S^*$.
Furthermore, $S^*$ and $L^*$ can be computed from $X$ and $Y$ in polynomial time.
Thus, by the induction assumption, there exist $S \subseteq S^*$ and $L \subseteq L^*$ with the desired properties.
\end{proof}

Let $S$ and $L,$ be as in Lemma~\ref{lem:Sconnected}.
We will prove in the following that there exists a spanning tree of $B(S,L)$ such that all the vertices of $S$ and exactly $|S|-1$ vertices of $L$ are internal. Note that there cannot be more than $2|S|-1$ 
internal vertices in a spanning tree of $B(S,L)$ without creating cycles. By Lemma~\ref{lem:Sconnected}, we know that there exists a spanning tree of $B(S,L)$ in which $|S|-1$ vertices of $L$ have degree exactly 
$2$.

Consider the bipartite graph $B_2$ obtained from $B(S,L)$ by adding a copy $S_c$ of $S$ (each vertex in $S$ has the same neighborhood as its copy in $S_c$ and no vertex of $S_c$ is adjacent to a vertex in $S$). As $|L|\ge |S \cup S_c|$ and each subset $Z$ of $S \cup S_c$ has at least $|Z|$ neighbors in $L$, by Hall's theorem, there exists a matching in $B_2$ saturating $S\cup S_c$. This means that in $B(S,L)$, there exist two edge-disjoint matchings $M_1$ and $M_2$, both saturating $S$. We refer to the edges  from $M_1\cup M_2$ as the \emph{favorite} edges.

\begin{lemma}\label{lem:SLinternal}
 $B(S,L)$ has a spanning tree $T$ such that all the vertices of $S$ and $|S|-1$ vertices of $L$ are internal in $T$.
\end{lemma}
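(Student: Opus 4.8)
Our plan is to prove the equivalent but more symmetric statement: \emph{$B(S,L)$ has a spanning tree $T$ with $d_T(v)\ge 2$ for every $v\in S$ and $d_T(v)\le 2$ for every $v\in L$.} This suffices. Indeed, in any spanning tree of $B(S,L)$ with all $L$-degrees $\le 2$, every vertex of $L$ has degree $1$ or $2$ (it is $\ge 1$ by connectivity), the number of edges is $|S|+|L|-1$, and since $B(S,L)$ is bipartite with parts $S$ and $L$ this number also equals $\sum_{v\in L}d_T(v)$; hence exactly $|S|-1$ vertices of $L$ have degree $2$. So such a $T$ has all of $S$ internal and exactly $|S|-1$ vertices of $L$ internal, which is what Lemma~\ref{lem:SLinternal} asks for (and this matches the observed upper bound of $2|S|-1$ internal vertices). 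We will use repeatedly that $|L|\ge 2|S|$ (apply the $L$-expansion of $S$ to $Z=S$, or recall this is exactly what was used above to produce $M_1,M_2$).

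The argument is an extremal/exchange argument. The family of spanning trees of $B(S,L)$ with all $L$-degrees $\le 2$ is nonempty by Lemma~\ref{lem:Sconnected}; choose one, $T$, minimizing the number of vertices of $S$ that are leaves of $T$, and suppose for contradiction this minimum is positive. Let $a\in S$ be a leaf of $T$ with unique $T$-neighbour $\ell_0\in L$; since $T$ is connected on $\ge 3$ vertices, $d_T(\ell_0)=2$. Because $M_1,M_2$ are edge-disjoint matchings saturating $S$, the vertex $a$ lies on two favorite edges, to distinct vertices $\ell_1(a)\ne\ell_2(a)$; pick one of them, $af$, with $f\ne\ell_0$ (and note $af\notin E(T)$, as $a\ell_0$ is the only $T$-edge at $a$). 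Let $C$ be the fundamental cycle of $af$ with respect to $T$; then the two neighbours of $a$ on $C$ are $f$ and $\ell_0$, and $P:=C\setminus\{af\}$ is the $a$–$f$ path of $T$. The plan is to delete one edge of $P$ and add $af$, obtaining a spanning tree $T'$ that still has all $L$-degrees $\le 2$ but strictly fewer $S$-leaves — contradicting minimality.

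There are two cases. \emph{Case $f$ is a leaf of $T$.} Adding $af$ raises $d(f)$ to $2$, so no constraint is created at $f$, and it is enough to delete an edge $xy$ of $P$ with $x\in S$ and $d_T(x)\ge 3$: then $y\in L$ has $d_T(y)=2$, and in $T'$ the vertex $x$ stays internal, $y$ is replaced by $f$ among the degree-$2$ vertices of $L$, and $a$ becomes internal — one fewer $S$-leaf. If no such edge exists, then every internal vertex of $P$ has $T$-degree $2$, so $P$ is a connected component of $T$, hence $P=T$; but $P$ alternates between $S$ and $L$ with endpoints $a\in S$ and $f\in L$, so $|S|=|L|$, contradicting $|L|\ge 2|S|$. \emph{Case $f$ is internal ($d_T(f)=2$).} Now adding $af$ makes $d(f)=3$, so to keep $L$-degrees $\le 2$ we are forced to delete the unique edge of $C$ at $f$ other than $af$, namely $fu$, where $u\in S$ is the neighbour of $f$ on $P$. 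If $d_T(u)\ge 3$ this works ($u$ stays internal, $f$ returns to degree $2$, $a$ becomes internal), contradiction. If $\ell_0\notin\{\ell_1(a),\ell_2(a)\}$ we may try either choice of $f$, and we win unless both are ``bad.''

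The leftover ``stuck'' configuration — which I expect to be the only delicate point — is: $a$ is a leaf and every favorite partner $f\in\{\ell_1(a),\ell_2(a)\}\setminus\{\ell_0\}$ is internal in $T$ with the corresponding $u$ having $d_T(u)=2$. Then $T':=T+af-fu$ is still a valid tree with all $L$-degrees $\le 2$, but it merely \emph{moves} the offending leaf from $a$ to $u$, leaving the number of $S$-leaves unchanged. To break this, I would refine the extremal choice by additionally requiring $T$ to contain the maximum possible number of favorite edges (subject to having the fewest $S$-leaves). Since $af\in M_1\cup M_2$, the move $T\mapsto T'$ strictly increases the number of favorite edges unless $fu$ is itself a favorite edge; and because $M_1,M_2$ are matchings, $fu\in M_i$ forces $u$ to be the $M_i$-mate of $f$, which cannot equal $a$, so only a very rigid subcase survives, in which the successive moves trace a ``chain'' $a\to u\to\cdots$ along $T$ that — by the same balancedness obstruction ($|S|\ne|L|$) that ruled out $P=T$ — cannot close up, so iterating terminates in one of the good cases above. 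Everything else is the two-case analysis; note that having \emph{two} matchings (not one) is what guarantees a favorite edge at $a$ distinct from $a\ell_0$ and, more importantly, gives every vertex of $S$ two independent ``escape routes,'' which is exactly what the exchange exploits.
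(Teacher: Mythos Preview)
Your approach is essentially the paper's: start from the tree of Lemma~\ref{lem:Sconnected}, and repeatedly fix an $S$-leaf by swapping in a favorite edge, using the number of favorite edges in $T$ as a potential function. But you miss the one observation that makes the argument short and turns your sketch into a proof.

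Recall how $M_1,M_2$ are produced: they come from a \emph{single} matching in $B_2$ saturating $S\cup S_c$. In that matching every vertex of $L$ is used at most once, so after projecting back to $B(S,L)$, \emph{every vertex of $L$ is incident to at most one favorite edge}. Consequently, when you add the favorite edge $af$ and delete the other cycle edge at $f$ (your edge $fu$), the deleted edge is never favorite. Hence the favorite-edge count strictly increases at every step, and the process halts after at most $2|S|$ iterations with no $S$-leaves. This is exactly the paper's one-paragraph proof, and it needs neither your Case~1/Case~2 split nor any secondary extremal choice.

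Without this observation your write-up has a genuine gap: the ``chain'' argument in the final paragraph is not a proof. You assert that iterating the leaf-shifting move $T\mapsto T+af-fu$ must terminate ``by the same balancedness obstruction,'' but you give no potential function and no reason the process cannot cycle (indeed, the obvious danger is that the next step re-inserts $fu$). The balancedness contradiction you invoke in Case~1 (forcing $P=T$ and $|S|=|L|$) has no analogue here. So as written the argument is incomplete --- but the fix is simply to use the property of $M_1\cup M_2$ above, which renders the ``stuck'' subcase vacuous and collapses the whole proof to the paper's version.
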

\begin{proof}
 Let $T$ be a spanning tree of $B(S,L)$ in which all vertices of $L$ have degree at most $2$, obtained using Lemma~\ref{lem:Sconnected}. As $T$ is a tree, exactly $|S|-1$ vertices of $L$ have degree $2$ in $T$. As long as a vertex $v\in S$ is not internal in $T$, add a favorite edge $u v$ to $T$ which was not yet in $T$ ($u\in L$), and remove an appropriate edge from the tree which is incident to $u$ so that $T$ remains a spanning tree. Vertex $v$ becomes internal and the degree of $u$ in $T$ remains unchanged. As $u$ is only incident to one favorite edge, this rule increases the number of favorite edges in $T$ even though it is possible that some other vertex in $S$ would have become a leaf.  We apply this rule until no longer possible. We know that this rule can only be applied at most $|S|$ times. In the end, all the vertices of $S$ are internal and $|S|-1$ vertices among $L$ are internal as their degrees remain the same.
\end{proof}

To conclude with  the proof of Lemma~\ref{lemma:combinatorial}, we observe that $S \subseteq C$, $L \subseteq I$ and $N(L)=S$ by the construction of $S$ and $L$, and by Lemma~\ref{lem:SLinternal}, $B(S,L)$ has a spanning tree in which all the vertices of $S$ and $|S|-1$ vertices of $L$ are internal.

\section{Conclusion}\label{sec:conclusion}

We have designed a polynomial time algorithm transforming an instance $(G,k)$ for \MIST
into an equivalent instance $(G',k')$ for \MIST such that $G'$ has at most $3k$ vertices.
Moreover, a spanning tree of $G$ with $k$ internal vertices
can be obtained from a spanning tree of $G'$ on $k'$ internal vertices in polynomial time, by replacing vertices $v_S$ and $v_L$ 
as described by Lemma \ref{lem:correct}.
Since, for an instance $(G,k)$, the reduction rule can be executed whenever a DFS tree of $G$ has at most $|V(G)|/3$ internal nodes,
our algorithm also implies a 3-approximation in polynomial time. The best known approximation ratio
that can be achieved in polynomial time for this problem is currently $5/3$ \cite{KnauerS09}.

We finish by asking whether the kernel size can be improved.

\bibliographystyle{plain}

\end{document}